\begin{document}
%
\conferenceinfo{ISSAC 2010,}{25--28 July 2010, Munich, Germany.}
\CopyrightYear{2010}
\crdata{978-1-4503-0150-3/10/0007}

\title{Polynomial integration on regions defined by a triangle and a conic}

\numberofauthors{1}
\author{
%
%
\alignauthor
David Sevilla and Daniel Wachsmuth\\ 
       \affaddr{Johann Radon Institute for Computational and Applied Mathematics (RICAM)}\\
       \affaddr{Austrian Academy of Sciences}\\
       \affaddr{Altenbergerstrasse 69}\\
       \affaddr{A-4040 Linz, Austria}\\
       \email{david.sevilla@oeaw.ac.at , daniel.wachsmuth@oeaw.ac.at}
}

\date{23 January 2010}

\newtheorem{theorem}{Theorem}[section]
\newtheorem{lemma}[theorem]{Lemma}
\newtheorem{proposition}[theorem]{Proposition}
\newtheorem{algorithm}[theorem]{Algorithm}
\newtheorem{corollary}[theorem]{Corollary}
\newtheorem{conjecture}[theorem]{Conjecture}
\newtheorem{definition}[theorem]{Definition}
\newtheorem{example}[theorem]{Example}
\newtheorem{remark}[theorem]{Remark}
\newtheorem{assumption}[theorem]{Assumption}

\maketitle

\begin{abstract}

We present an efficient solution to the following problem, of relevance in a numerical optimization scheme: calculation of integrals of the type
\[
  \iint_{T \cap \{f\ge0\}} \phi_1\phi_2 \, dx\,dy
\]
for quadratic polynomials $f,\phi_1,\phi_2$ on a plane triangle $T$. The naive approach would involve consideration of the many possible shapes of $T\cap\{f\geq0\}$ (possibly after a convenient transformation) and parameterizing its border, in order to integrate the variables separately. Our solution involves partitioning the triangle into smaller triangles on which integration is much simpler.
\end{abstract}

\category{G.1.8}{Numerical Analysis}{Partial Differential Equations}[finite element methods]
\category{I.1.2}{Symbolic and algebraic manipulation}{Algorithms}[algebraic algorithms]


\keywords{symbolic integration, triangular subdivision, optimal control, variational discretization, quadratic shape functions}

\section{Introduction}

This article presents a symbolic solution to a problem of relevance in a numerical optimization scheme: the numerical solution of optimal control problems with partial differential equations as constraint requires to discretize the problem, i.e.\@ to solve finite-dimensional approximations, see e.g.\@ \cite{tro05}. When applying the variational discretization concept \cite{hin05}, the following problem arises: integrals of the type
\begin{equation}\label{eq01}
  \iint_{T \cap \{f\ge0\}} g(x,y) \, dx\,dy \,, \qquad g=\phi_1\cdot\phi_2
\end{equation}
for quadratic polynomials $f,\phi_1,\phi_2$ on a plane triangle $T$ have to be evaluated accurately.
Up to now the variational discretization method was used only for degree $1$, i.e.\@ where the function $f$ defining the integral region in \eqref{eq01} is a polynomial of degree $1$. Using polynomials with higher order gives better approximation results, see Theorem~\ref{theo55} below.

The naive approach to compute the integral \eqref{eq01} would involve consideration of the many possible shapes of $T\cap\{f\geq0\}$ and parameterizing its border, in order to integrate the variables separately. This suffers from some computational difficulties as we show below.

\begin{example}\label{example: ellipse partially in triangle}
  Suppose that one side of the triangle lies on a horizontal line. Consider the situation where the region of integration is the part of the interior of an ellipse in the triangle, as in the figure.
\[\xy
  0; <6pt,0pt>: 
  (0,0)*{}="A"; "A"+(-1,0)*{A};
  (30,0)*{}="B"; "B"+(1,0)*{B};
  (10,15)*{}="C"; "C"+(0,1)*{C};
  "A"; "B", **\dir{-}; "C", **\dir{-}; "A", **\dir{-};
  {
    (13,5)="c"; "c"+(6,1): 
    "c",{\ellipse(2,1):a(0),^,=:a(360){.}};
  };
  (1.35,0)="x1"; p+(0,2)="y1"; **\dir{.}; "x1"-(0,1)*{x_1};
  (3.6,0)="x2"; "x2"+(0.1,-1)*{x_2};
  (6.13,0)="x3"; "x3"+(0,9.2)="y3"; **\dir{.}; "x3"-(0,1)*{x_3};
  (14.93,0)="x4"; "x4"+(0,11.3)="y4"; **\dir{.}; "x4"-(0,1)*{x_4};
  (17.9,0)="x5"; "x5"+(0.1,-1)*{x_5};
  (24.28,0)="x6"; "x6"+(0,4.26)="y6"; **\dir{.}; "x6"-(0,1)*{x_6};
  "x2"; "y1", {\ellipse:a(70),_:a(30){}};
  "y3"; "y4", {\ellipse:a(117),_:a(90){}};
  "x5"; "y6", {\ellipse:a(-73),^:a(-32){}};
\endxy\]

We see that we could calculate the integral as the sum of five integrals on domains perpendicular to the $x$ axis. For example the first one is
\[
  \int_{x_1}^{x_2} \left( \int_{c_-(x)}^{l_{AC}(x)} g(x,y) \, dy \right) dx
\]
where $y=l_{AC}(x)$ is the equation of the line $AC$ and $y=c_-(x)$ is the equation of the lower part of the ellipse. Note that we need to parameterize the ellipse (this will involve at best a square root or trigonometric functions) and calculate the $x$-coordinates of the relevant points, which also involve square roots. The value of the inner integral will be given by a formula of which an antiderivative must be computed then. The resulting formula is far from simple.

An alternative would be to apply an affine transformation so that the ellipse becomes a circle centered at the origin, followed by a change to polar coordinates. This does not make the integral significantly easier to compute.

And of course, here we use our knowledge of the relative position of the ellipse and the triangle, as in the figure; the possible relative positions of a conic and a triangle are many, and to discern them is not trivial.
\end{example}

In contrast, in the following particular case we obtain a simple formula.

\begin{example}\label{example: full triangle}
  Let $A=(0,0)$, $B=(1,0)$, $C=(1,1)$, and assume that $f\geq0$ on the triangle $T = ABC$. If $g(x,y)=\displaystyle\sum_{i+j\leq4} b_{ij}x^iy^j$ then the integral becomes

  \begin{multline} \label{eq: full triangle integration formula}
    \iint_T g(x,y) \, dx\,dy = \int_0^1 \left( \int_0^x g(x,y) \, dy \right) dx = \\
    = \sum_{i+j\leq4} \frac{b_{ij}}{(j+1)(i+j+2)}
  \end{multline}

  For a general triangle $T$, one applies an affine transformation which brings the vertices to the above points. The resulting integrand is a polynomial of the same degree, and only a constant factor is introduced by the substitution formula.
\end{example}

Our solution involves partitioning $T$ into smaller triangles on which integration is much simpler. The result is a decision tree and several relatively simple explicit formulas, which form Algorithm \ref{algorithm: main algorithm}. The particular nature of $g$, beyond it being a polynomial, will be immaterial. Our method could in principle be adapted for larger values of $\deg f$, although it may become too complicated for practical uses even for degree 3. Besides, only the quadratic case is relevant for the context in which this problem arose. It is important to point out that an implementation for floating point arithmetic would need a more detailed treatment, see the end of Section \ref{section: application}.

We describe the subdivision method in Section \ref{section: triangular subdivision}, the integration in the base cases in Section \ref{section: base case integration}, the complete algorithm in Section \ref{section: the algorithm}, and a description of the application to optimization in Section \ref{section: application} which includes some comments on the practical implementation of our algorithm.

\section{Triangular subdivision}\label{section: triangular subdivision}

Our idea is to reduce the number of intersections between the curve $f=0$ and the sides of the triangle, by cutting the triangle into pieces until we reach some base cases that we establish below. For those cases the integration will be much simpler than in Example~\ref{example: ellipse partially in triangle}. We leave for later the case where the conic $f=0$ is degenerate (two lines, either intersecting, parallel, or coincident; one point; the empty set). Note that the type of a conic can be determined quickly by inspection of the equation.

First we introduce some nomenclature.

\begin{definition}
  Fix a nonsingular conic. A segment is called \emph{free} (with respect to the conic) if it does not intersect it except possibly at the vertices of the segment.
\end{definition}

\begin{remark}\label{remark: intersections of conic with lines and segments}$ $
  \begin{enumerate}
    \item Any line or segment intersects any conic at most at two points.
    \item A segment joining two points of the conic is always free.
  \end{enumerate}
\end{remark}

\begin{proof}
  Part 1 is a simple case of the weak B\'ezout's theorem \cite{walker1978}. Part 2 is a clear consequence of part 1.
\end{proof}

The calculation of the intersections of a segment with a given conic (and thus the determination of the freedom of the segment) is straightforward, and fast in practice.

The next definition encapsulates the base cases of our subdivision method.

\begin{definition}
  A triangle is called \emph{free} if either all its sides are free, or the intersection of its border with the conic is just one non-vertex point.
\end{definition}

Thus there are five types of free triangles: those with all sides free and 0, 1, 2 or 3 intersections at the vertices; and those with no vertex intersections and one side intersection.
\[\xy
  0; <1.2pt,0pt>: 
  (0,0)*{\xy
    (0,0)*{}="A"; (30,0)*{}="B"; (10,25)*{}="C";
    "A"; "B"; **\dir{-};
    "B"; "C"; **\dir{-};
    "C"; "A"; **\dir{-};
    \endxy};
  (40,0)*{\xy
    (0,0)*{\bullet}="A"; (30,0)*{}="B"; (10,25)*{}="C";
    "A"; "B"; **\dir{-};
    "B"; "C"; **\dir{-};
    "C"; "A"; **\dir{-};
    \endxy};
  (80,0)*{\xy
    (0,0)*{\bullet}="A"; (30,0)*{\bullet}="B"; (10,25)*{}="C";
    "A"; "B"; **\dir{-};
    "B"; "C"; **\dir{-};
    "C"; "A"; **\dir{-};
    \endxy};
  (120,0)*{\xy
    (0,0)*{\bullet}="A"; (30,0)*{\bullet}="B"; (10,25)*{\bullet}="C";
    "A"; "B"; **\dir{-};
    "B"; "C"; **\dir{-};
    "C"; "A"; **\dir{-};
    \endxy};
  (160,0)*{\xy
    (0,0)*{}="A"; (30,0)*{}="B"; (10,25)*{}="C";
    "A"; "B"; **\dir{-}; ?(.5)*{\bullet}="AB"; 
    "B"; "C"; **\dir{-};
    "C"; "A"; **\dir{-};
    \endxy};
\endxy\]

The rest of the section describes how to divide a given triangle so that all the pieces are free triangles. We proceed step by step in terms of the number of free sides.

\begin{lemma}\label{lemma: no free sides -> 7 free triangles}
  Every triangle with no free sides can be cut into seven free triangles.
\end{lemma}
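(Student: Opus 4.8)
The plan is to slice the three corners off $T$ along chords of the conic and then fan-triangulate the region that is left. Since no side of $T$ is free, each side meets the conic, and by part~1 of Remark~\ref{remark: intersections of conic with lines and segments} it meets it in at most two points. I would first treat the generic case, where each side meets the conic in exactly two points and none of these is a vertex of $T$. Label the points so that on $AB$ the point $P_1$ lies between $A$ and $P_2$, on $BC$ the point $Q_1$ lies between $B$ and $Q_2$, and on $CA$ the point $R_1$ lies between $A$ and $R_2$. Each of the six segments $AP_1$, $AR_1$, $BP_2$, $BQ_1$, $CQ_2$, $CR_2$ has no intersection point with the conic in its interior and so is free; each of the three chords $P_1R_1$, $P_2Q_1$, $Q_2R_2$ is free by part~2 of the same remark. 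Hence the three corner triangles $AP_1R_1$, $BP_2Q_1$ and $CQ_2R_2$ have all sides free, so they are free triangles.

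Next I would remove these corners one at a time and check that the leftover region is as expected. Each removal cuts a convex region by a line through two of its boundary points, so the result stays convex: $T$ with the corner $AP_1R_1$ removed is the convex quadrilateral $P_1BCR_1$; removing the corner $BP_2Q_1$ from it leaves the convex pentagon $P_1P_2Q_1CR_1$; removing $CQ_2R_2$ leaves the convex hexagon $H=P_1P_2Q_1Q_2R_2R_1$, all six of whose vertices lie on the conic. In particular the three corner triangles are pairwise interior-disjoint. Fan-triangulating $H$ from $P_1$ gives the four triangles $P_1P_2Q_1$, $P_1Q_1Q_2$, $P_1Q_2R_2$, $P_1R_2R_1$; each has all three vertices on the conic, so each of its sides is a chord of the conic and hence free by part~2 of Remark~\ref{remark: intersections of conic with lines and segments}. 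Thus $H$ is cut into four free triangles, and altogether $T$ is cut into $3+4=7$ free triangles.

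The main obstacle is the bookkeeping for the non-generic configurations that are still compatible with ``no free side'': a side tangent to the conic, so that it carries only one intersection point, or the conic passing through a vertex of $T$. Then some of the labelled points coincide or land on a vertex, some of the corner triangles degenerate to a point, and the leftover convex polygon has fewer than six vertices; but the same two-step recipe (peel the corners along chords, then fan-triangulate the remaining polygon, whose every vertex still lies on the conic) applies verbatim and now produces at most seven free triangles. I would fix the labelling conventions carefully at the very start, since an unlucky choice there is the one thing that could make two corner triangles overlap.
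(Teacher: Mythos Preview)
Your argument is correct and is essentially the paper's proof. Both proofs triangulate using only the conic--side intersection points, relying on part~2 of Remark~\ref{remark: intersections of conic with lines and segments} to certify every new interior segment as a free chord and noting that the sub-segments of the original sides between consecutive intersection points are free by construction. The paper dispatches the four configurations $(2{+}2{+}2,\,2{+}2{+}1,\,2{+}1{+}1,\,1{+}1{+}1)$ of interior-intersection counts by four explicit pictures, whereas you phrase the same construction uniformly as ``peel the three corners along chords, then fan-triangulate the remaining convex polygon with vertices on the conic''; the degenerate cases you allude to in your last paragraph are exactly the paper's remaining three pictures, and, as the paper notes, possible vertex intersections are irrelevant.
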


\begin{proof}
  Each non-free side has one or two interior intersections with the conic. We draw the four cases and one solution for each (possible vertex intersections are irrelevant here, thus not drawn). All the small triangles can be proven free by noting that their sides are either free parts of the original sides, or segments connecting two intersections (thus free by part 2 of Remark~\ref{remark: intersections of conic with lines and segments}).

\[\xy
  0; <1.7pt,0pt>: 
  (0,0)*{\xy
    (0,0)*{}="A"; (30,0)*{}="B"; (10,25)*{}="C";
    "A"; "B"; **\dir{-}; ?(.33)*{\bullet}="ABB"; ?(.67)*{\bullet}="AAB"; 
    "B"; "C"; **\dir{-}; ?(.33)*{\bullet}="BCC"; ?(.67)*{\bullet}="BBC";
    "C"; "A"; **\dir{-}; ?(.33)*{\bullet}="AAC"; ?(.67)*{\bullet}="ACC";
    "AAB"; "AAC"; **\dir{--}; 
    "AAB"; "ACC"; **\dir{--};
    "AAB"; "BCC"; **\dir{--};
    "AAB"; "BBC"; **\dir{--};
    "ABB"; "BBC"; **\dir{--};
    "ACC"; "BCC"; **\dir{--};
    \endxy};
  (35,0)*{\xy
    (0,0)*{}="A"; (30,0)*{}="B"; (10,25)*{}="C";
    "A"; "B"; **\dir{-}; ?(.5)*{\bullet}="AB"; 
    "B"; "C"; **\dir{-}; ?(.33)*{\bullet}="BCC"; ?(.67)*{\bullet}="BBC";
    "C"; "A"; **\dir{-}; ?(.33)*{\bullet}="AAC"; ?(.67)*{\bullet}="ACC";
    "AB"; "AAC"; **\dir{--}; 
    "AB"; "ACC"; **\dir{--};
    "AB"; "BCC"; **\dir{--};
    "AB"; "BBC"; **\dir{--};
    "ACC"; "BCC"; **\dir{--};
    \endxy};
  (70,0)*{\xy
    (0,0)*{}="A"; (30,0)*{}="B"; (10,25)*{}="C";
    "A"; "B"; **\dir{-}; ?(.5)*{\bullet}="AB"; 
    "B"; "C"; **\dir{-}; ?(.33)*{\bullet}="BCC"; ?(.67)*{\bullet}="BBC";
    "C"; "A"; **\dir{-}; ?(.4)*{\bullet}="AC";
    "AB"; "AC"; **\dir{--}; 
    "AB"; "BCC"; **\dir{--};
    "AB"; "BBC"; **\dir{--};
    "AC"; "BCC"; **\dir{--};
    \endxy};
  (105,0)*{\xy
    (0,0)*{}="A"; (30,0)*{}="B"; (10,25)*{}="C";
    "A"; "B"; **\dir{-}; ?(.55)*{\bullet}="AB"; 
    "B"; "C"; **\dir{-}; ?(.5)*{\bullet}="BC";
    "C"; "A"; **\dir{-}; ?(.5)*{\bullet}="AC";
    "AB"; "AC"; **\dir{--}; 
    "AB"; "BC"; **\dir{--};
    "AC"; "BC"; **\dir{--};
    \endxy};
\endxy\]
\end{proof}

The next step is to consider non-free triangles with one free side. We introduce another useful term.

\begin{definition}
  A triangle is \emph{almost free} if exactly one of its sides is not free, and that side has only one intersection in its interior.
\end{definition}

\begin{remark}\label{remark: types of almost-free triangles}
  There are four types of almost-free triangles, depending on the vertex intersections.
\[\xy
  0; <1.7pt,0pt>: 
  (0,0)*{\xy
    (0,0)*{}="A"; (30,0)*{}="B"; (10,25)*{}="C";
    "A"; "B"; **\dir{-}; ?(.5)*{\bullet}="AB"; 
    "B"; "C"; **\dir{-};
    "C"; "A"; **\dir{-};
    \endxy};
  (35,0)*{\xy
    (0,0)*{\bullet}="A"; (30,0)*{}="B"; (10,25)*{}="C";
    "A"; "B"; **\dir{-}; ?(.5)*{\bullet}="AB"; 
    "B"; "C"; **\dir{-};
    "C"; "A"; **\dir{-};
    \endxy};
  (70,0)*{\xy
    (0,0)*{}="A"; (30,0)*{}="B"; (10,25)*{\bullet}="C";
    "A"; "B"; **\dir{-}; ?(.5)*{\bullet}="AB"; 
    "B"; "C"; **\dir{-};
    "C"; "A"; **\dir{-};
    \endxy};
  (105,0)*{\xy
    (0,0)*{\bullet}="A"; (30,0)*{}="B"; (10,25)*{\bullet}="C";
    "A"; "B"; **\dir{-}; ?(.5)*{\bullet}="AB"; 
    "B"; "C"; **\dir{-};
    "C"; "A"; **\dir{-};
    \endxy};
\endxy\]
  Note that if a triangle is almost free and has no vertex intersections with the conic, then it is free (first case).
\end{remark}

\begin{lemma}\label{lemma: 1 free side -> 5 almost-free or free triangles}
  Every triangle with exactly one free side can be cut into five free or almost-free triangles, with zero or two of them being almost free.
\end{lemma}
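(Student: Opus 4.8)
The plan is to set up a case analysis on the pattern of intersections of the conic with the two non-free sides, mirroring the structure of the proof of Lemma~\ref{lemma: no free sides -> 7 free triangles}. Call the free side $AB$ and the non-free sides $BC$ and $CA$. Each non-free side carries one or two interior intersections with the conic, so there are a small number of combinations to treat: $(2,2)$, $(2,1)$, $(1,2)$, and $(1,1)$ interior intersections; the last two are symmetric under relabeling, so effectively three cases. Vertex intersections are again irrelevant to the construction and are suppressed in the figures.

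First I would handle the generic $(2,2)$ case. Label the intersections on $CA$ as $P_1,P_2$ (ordered from $A$ to $C$) and on $BC$ as $Q_1,Q_2$ (ordered from $B$ to $C$). The idea is to peel off the ``cap'' near $C$ bounded by the free segment $P_2Q_2$ — this is a free triangle $CP_2Q_2$ since $CP_2$, $CQ_2$ are free parts of the original sides and $P_2Q_2$ joins two points of the conic (part~2 of Remark~\ref{remark: intersections of conic with lines and segments}). The remaining quadrilateral $ABQ_2P_2$ still has the free side $AB$ and two sides carrying one interior intersection each ($P_1$ on $AP_2$, $Q_1$ on $BQ_2$); triangulating it by the diagonal from, say, $P_1$ to the opposite vertices and then cutting along $P_1Q_1$ produces four more triangles, each of whose sides is either a free sub-segment of an original side, the free side $AB$, or a free chord of the conic. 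A count of which of these contain a genuine interior intersection on a side shows exactly two of them are almost free and the rest free, for a total of five. The $(2,1)$ case is the same construction with one of $P_1,Q_1$ absent, and $(1,1)$ likewise — I would draw one solution for each, exactly as in the previous lemma, and verify freedom of each piece by the same ``free sub-segment or free chord'' observation.

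The two bookkeeping claims that need care are: (i) the total is always \emph{five} pieces, not four or six, regardless of the intersection pattern; and (ii) the number of almost-free pieces is \emph{zero or two}, never one. For (ii) the key point is a parity-type observation: a segment of the conic entering the triangle's interior must leave it, so interior intersection points on the boundary of any sub-polygon come in pairs that get separated by the cuts; a single almost-free triangle would strand one such point. I expect (ii) to be the main obstacle — making the pairing argument clean rather than ad hoc per picture — whereas (i) follows from a straightforward Euler-characteristic or direct vertex/edge count on each of the (few) subdivision diagrams.

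Finally I would remark, as in Remark~\ref{remark: types of almost-free triangles}, that after this step every piece is either free or almost free, so the next lemma (handling almost-free triangles) will complete the reduction to base cases. A clean way to present the whole proof is simply to exhibit the subdivision figures for the three essential cases and append the one-line freedom check plus the count, in direct parallel to Lemma~\ref{lemma: no free sides -> 7 free triangles}.
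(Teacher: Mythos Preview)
Your approach---case split on the interior-intersection counts $(2,2)$, $(2,1)$, $(1,1)$ and a triangulation built from chords of the conic---is the same as the paper's. The main difference is in how you justify the ``zero or two almost-free'' count. You propose a parity/topological argument about the conic entering and leaving sub-polygons; the paper's argument is much more direct and you should adopt it: in each case the new interior edges are all chords of the conic \emph{except for one}, namely the segment from a vertex ($B$) to an intersection point on the far side. That single edge is the only one that can fail to be free, and since one of its endpoints already lies on the conic, Remark~\ref{remark: intersections of conic with lines and segments}(1) forces it to have at most one interior intersection. It is shared by exactly two of the sub-triangles, so either both are free or both are almost free. This replaces your bookkeeping item~(ii) entirely.

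Two smaller points. First, your claim~(i) that the total is \emph{always} five is not what the construction gives: the $(2,1)$ and $(1,1)$ diagrams naturally produce four and three pieces respectively, and ``five'' in the lemma is an upper bound. Second, your description of the $(2,2)$ case is internally inconsistent: you assert that every side of every piece is ``a free sub-segment \ldots\ or a free chord,'' then immediately say two pieces are almost free. The diagonal from $P_1$ (or the analogous point) to the opposite \emph{vertex} $B$ is neither a sub-segment of a side nor a chord; it is exactly the possibly-non-free edge described above, and acknowledging that resolves the inconsistency.
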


\begin{proof}
  There are three cases depending on the number of interior intersections with the non-free sides: $2+2$, $2+1$ and $1+1$. The diagrams show how to cut the triangle in the three cases (possible vertex intersections, marked in white, make no difference). The numbers of free sides in each piece are indicated. As before, if a segment intersects a conic in its endpoints then it is free by part 2 of Remark~\ref{remark: intersections of conic with lines and segments}.
\[\xy
  0; <2.3pt,0pt>: 
  (0,0)*{\xy
    (0,0)*{}="A"; (30,0)*{}="B"; (10,25)*{}="C";
    "A"; "B"; **\dir{-}; 
    "B"; "C"; **\dir{-}; ?(.33)*{\bullet}="BCC"; ?(.67)*{\bullet}="BBC";
    "C"; "A"; **\dir{-}; ?(.3)*{\bullet}="AAC"; ?(.6)*{\bullet}="ACC";
    "B"; "AAC"; **\dir{.}; 
    "BBC"; "AAC"; **\dir{--};
    "BBC"; "ACC"; **\dir{--};
    "BCC"; "ACC"; **\dir{--};
    "C"; "ACC"; **\dir{}; ?(.5)*{}; "BCC"; **\dir{}; ?(.67)*{3}; 
    "BBC"; "ACC"; **\dir{}; ?(.5)*{}; "BCC"; **\dir{}; ?(.67)+(.5,.5)*{3};
    "BBC"; "ACC"; **\dir{}; ?(.5)*{}; "AAC"; **\dir{}; ?(.67)+(-1,.5)*{3};
    "BBC"; "B"; **\dir{}; ?(.5)*{}; "AAC"; **\dir{}; ?(.67)+(1,0)*{2/3};
    "B"; "A"; **\dir{}; ?(.5)*{}; "AAC"; **\dir{}; ?(.67)+(-2,0)*{2/3};
    \endxy};
  (35,0)*{\xy
    (0,0)*{}="A"; (30,0)*{\circ}="B"; (10,25)*{}="C";
    "A"; "B"; **\dir{-}; 
    "B"; "C"; **\dir{-}; ?(.5)*{\bullet}="BC";
    "C"; "A"; **\dir{-}; ?(.33)*{\bullet}="AAC"; ?(.67)*{\bullet}="ACC";
    "B"; "AAC"; **\dir{.}; 
    "BC"; "AAC"; **\dir{--};
    "BC"; "ACC"; **\dir{--};
    "C"; "ACC"; **\dir{}; ?(.5)*{}; "BC"; **\dir{}; ?(.67)*{3}; 
    "BC"; "ACC"; **\dir{}; ?(.5)*{}; "AAC"; **\dir{}; ?(.67)*{3};
    "BC"; "B"; **\dir{}; ?(.5)*{}; "AAC"; **\dir{}; ?(.67)*{2/3};
    "B"; "A"; **\dir{}; ?(.5)*{}; "AAC"; **\dir{}; ?(.67)+(-2,0)*{2/3};
    \endxy};
  (70,0)*{\xy
    (0,0)*{\circ}="A"; (30,0)*{\circ}="B"; (10,25)*{}="C";
    "A"; "B"; **\dir{-}; 
    "B"; "C"; **\dir{-}; ?(.4)*{\bullet}="BC";
    "C"; "A"; **\dir{-}; ?(.4)*{\bullet}="AC";
    "B"; "AC"; **\dir{.}; 
    "BC"; "AC"; **\dir{--};
    "C"; "AC"; **\dir{}; ?(.5)*{}; "BC"; **\dir{}; ?(.67)*{3}; 
    "BC"; "B"; **\dir{}; ?(.5)*{}; "AC"; **\dir{}; ?(.67)*{2/3};
    "B"; "A"; **\dir{}; ?(.5)*{}; "AC"; **\dir{}; ?(.67)+(-2,0)*{2/3};
    \endxy};
\endxy\]

In all cases, the only segment that may not be free is the lowest new one (dotted), and it can only have one interior intersection (by part 1 of Remark~\ref{remark: intersections of conic with lines and segments}, since one of its endpoints is in the conic already).
\end{proof}

Next, we consider the case when two sides are free.

\begin{lemma}\label{lemma: 2 free sides -> 4 almost-free or free triangles}
  Every triangle with exactly two free sides can be cut into four free or almost-free triangles. At least one of them is free, except if the original triangle is almost free.
\end{lemma}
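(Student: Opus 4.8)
The plan is to treat the single non-free side of the triangle as the place where all the work happens, since that is the only side that can meet the conic at interior points. Label the triangle $ABC$ so that the two free sides are $AB$ and $AC$ (meeting at $A$), and the non-free side is $BC$. By part~1 of Remark~\ref{remark: intersections of conic with lines and segments}, $BC$ has either one or two interior intersections with the conic, giving two main cases. In the one-intersection case I would first observe that the triangle is already almost free by definition, so there is nothing to cut; to match the statement I can trivially regard it as ``cut into four'' by a vacuous subdivision, or, more honestly, note that the lemma's claim ``except if the original triangle is almost free'' is exactly this case. So the substance is the two-intersection case.

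In the two-intersection case, call the two interior points of $BC$ (in order from $B$ to $C$) $P$ and $Q$. I would draw the segments $AP$ and $AQ$. This partitions $ABC$ into three triangles: $ABP$, $APQ$, and $AQC$. The middle triangle $APQ$ has $PQ$ as a side joining two points of the conic, hence free by part~2 of the Remark; its other two sides $AP$ and $AQ$ are new cevians whose freedom is not yet controlled. The outer triangles $ABP$ and $AQC$ each have one side on an original free side ($AB$, resp.\ $AC$), one side being part of the original non-free side $BC$ but now only up to the first intersection point — so $BP$ and $QC$ are free, as each contains a conic point only at one endpoint and part~1 forbids a second — and again a cevian ($AP$ resp.\ $AQ$) as the third side. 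So after this first cut, the only possibly-non-free sides are the two cevians $AP$ and $AQ$, and each of them has an endpoint ($P$, resp.\ $Q$) already on the conic, so by part~1 each has at most one interior intersection with the conic. That means each of the three triangles has at most ``one side with at most one interior intersection,'' i.e.\ each is free or almost free already — but that is only three triangles, and one might need to subdivide further.

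Here is where I expect the bookkeeping to be the real content. If both cevians $AP$ and $AQ$ happen to be free, then all three triangles are free and we are done with three (hence also ``four'' after an arbitrary trivial split, or the lemma can be read as ``at most four''). If, say, $AP$ has an interior intersection $R$ with the conic, then in triangle $ABP$ I would cut along $PR$ extended to... — more carefully, I would instead make the first cut differently: rather than drawing both $AP$ and $AQ$, I would draw $PQ$ and then one further segment from $A$ through exactly one of $P,Q$, choosing the combinatorics so that the at-most-one extra intersection on each cevian is absorbed. Concretely, the clean way is: cut off the free middle triangle $APQ$ by the chord $PQ$, then in each of the two remaining quadrilateral... no — $ABP$ and $AQC$ are triangles, not quadrilaterals. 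So in $ABP$, if $AP$ meets the conic at an interior point $R$, then $AR$ joins vertex $A$... but $A$ need not be on the conic; instead $BR$ is a chord with $R$ on the conic and $B$ perhaps not — this is getting into the same situation recursively. The resolution the authors surely intend: after the single cut into $ABP$, $APQ$, $AQC$, the segment $AP$ can meet the conic at most once; draw, in the union $ABP \cup APQ$, the segment from $Q$ to that intersection point — or simply note that at most one of the two cevians is non-free in a way that forces a fourth triangle, because the conic, being nonsingular, meets $BC$ at $P,Q$ and the chord $PQ$ separates its two arcs, so $AP$ and $AQ$ cannot both cross the same arc between $BC$ and $A$ — yielding the bound of four. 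The main obstacle, then, is the careful case analysis of how the conic's single arc between $P$ and $Q$ can or cannot recross the cevians $AP$, $AQ$, and using convexity/nonsingularity (each line meets the conic at most twice, already used up near $BC$) to argue that at most one extra cut, producing one extra triangle, is ever needed — so that three triangles become four, never more, and at least one of the four is free because the middle triangle $APQ$ always is. I would present this as: (i) the chord cut $PQ$ plus cevians gives three triangles with $APQ$ free; (ii) each cevian has at most one interior conic point; (iii) a cevian with an interior conic point $R$ forces subdividing that one outer triangle by a further chord through $R$ (free by part~2) into two pieces, and only one of the two cevians can be in this situation — giving $1+1+1$ or $1+2+1 = 4$; (iv) in every sub-case the middle triangle $APQ$ (or one of its descendants bounded by a conic chord) is free, establishing the ``at least one is free'' clause, with the sole exception being when $BC$ had only one intersection to begin with, i.e.\ the original triangle was almost free.
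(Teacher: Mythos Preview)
Your overall strategy---draw the two cevians $AP$ and $AQ$ from the opposite vertex $A$ to the two interior intersections $P,Q$ on the non-free side $BC$---is exactly the paper's approach, and your initial bookkeeping is correct: the outer triangles $ABP$ and $AQC$ each have two free sides and a cevian, the middle triangle $APQ$ has the free side $PQ$ and two cevians, and each cevian, having one endpoint on the conic, admits at most one interior intersection.

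The genuine gap is your assertion that ``only one of the two cevians can be in this situation.'' This is false. Take $A=(0,3)$, $B=(-3,0)$, $C=(3,0)$ and the circle $x^2+(y-\tfrac12)^2=\tfrac54$: it meets $BC$ at $P=(-1,0)$ and $Q=(1,0)$, the sides $AB$ and $AC$ are free, yet both cevians $AP$ and $AQ$ meet the circle again at the interior points $(\mp\tfrac12,\tfrac32)$. Your topological sketch (``the chord $PQ$ separates the two arcs, so $AP$ and $AQ$ cannot both cross the same arc'') does not hold: a single arc from $P$ to $Q$ bulging toward $A$ crosses both cevians. The paper treats exactly this as its third case: when both cevians carry an interior intersection, the \emph{middle} triangle $APQ$ has two non-free sides, hence is neither free nor almost-free, and one further cut---joining the interior intersection on one cevian to the foot of the other, a chord of the conic---splits $APQ$ into one free and one almost-free piece. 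Together with the two outer triangles (each already almost-free), this gives the promised four, of which exactly one is free.

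A secondary point: your claim that $APQ$ is always the guaranteed free triangle is correct only when both cevians are free. When one cevian is non-free the free piece is the \emph{outer} triangle on the side of the free cevian; when both are non-free it is a sub-piece of the middle triangle after the extra cut. Relatedly, your step~(iii) proposes subdividing an outer triangle when its cevian is non-free, but that outer triangle is already almost-free and needs no cut; the extra cut, when required, goes into the middle.
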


\begin{proof}
  The non-free side has one or two interior intersections; in the former case, the triangle is almost free and we are finished. If it has two interior intersections, we join them with the opposite vertex and create two interior sides. There are three possibilities:
  \begin{enumerate}
    \item If there are no interior intersections in the new sides, the three pieces are free.
    \item If one of the new sides has one interior point, we obtain one free triangle and two almost-free triangles.
    \item If both new sides have one interior point each, with one extra cut we obtain one free triangle and three almost-free triangles.
  \end{enumerate}

  The dashed lines indicate the partitions described above.
\[\xy
  0; <2.3pt,0pt>: 
  (0,0)*{\xy
    (0,0)*{}="A"; (30,0)*{}="B"; (10,25)*{}="C";
    "A"; "B"; **\dir{-}; ?(.33)*{\bullet}="ABB"; ?(.67)*{\bullet}="AAB"; 
    "B"; "C"; **\dir{-};
    "C"; "A"; **\dir{-};
    "C"; "AAB"; **\dir{--}; 
    "C"; "ABB"; **\dir{--};
    \endxy};
  (35,0)*{\xy
    (0,0)*{}="A"; (30,0)*{}="B"; (10,25)*{}="C";
    "A"; "B"; **\dir{-}; ?(.33)*{\bullet}="ABB"; ?(.67)*{\bullet}="AAB"; 
    "B"; "C"; **\dir{-};
    "C"; "A"; **\dir{-};
    "C"; "AAB"; **\dir{--}; 
    "C"; "ABB"; **\dir{--}; ?(.4)*{\bullet}="X1";
    "X1"; "AAB"; **\dir{.};
    \endxy};
  (70,0)*{\xy
    (0,0)*{}="A"; (30,0)*{}="B"; (10,25)*{}="C";
    "A"; "B"; **\dir{-}; ?(.33)*{\bullet}="ABB"; ?(.67)*{\bullet}="AAB"; 
    "B"; "C"; **\dir{-};
    "C"; "A"; **\dir{-};
    "C"; "AAB"; **\dir{--}; ?(.5)*{\bullet}="X2"; 
    "C"; "ABB"; **\dir{--}; ?(.4)*{\bullet}="X1";
    "X1"; "AAB"; **\dir{--};
    "X1"; "X2"; **\dir{.};
    \endxy};
\endxy\]
  Note that in the last two cases, cutting along the dotted line reduces the number of almost-free triangles by one, but it increases the total number of triangles. This might make a small difference in performance.
\end{proof}

\begin{lemma}\label{lemma: almost free -> 4 free triangles}
  Every almost-free triangle can be cut into four free triangles.
\end{lemma}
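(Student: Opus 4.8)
The plan is to argue by cases, mirroring the diagram-driven style of Lemmas~\ref{lemma: no free sides -> 7 free triangles}--\ref{lemma: 2 free sides -> 4 almost-free or free triangles}, after first pinning down the local picture. Let $T$ be almost free with non-free side $s$ and unique interior intersection $P\in s$. If $T$ has no vertex on the conic then, by Remark~\ref{remark: types of almost-free triangles}, $T$ is already free and there is nothing to do, so assume some vertex lies on the conic. Since a line meets a conic in at most two points (part~1 of Remark~\ref{remark: intersections of conic with lines and segments}), the two endpoints of $s$ cannot both be on the conic, and not all three vertices can be; combined with the two other sides being free, this recovers exactly the four types of Remark~\ref{remark: types of almost-free triangles}. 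The structural input I would use is that, inside $T$, the conic is a single arc $\gamma$ running from $P$ to one of the on-conic vertices $V$: leaving $P$ it enters the interior of $T$, and as it can neither exit through the interior of a free side nor meet $s$ again in its interior, it must terminate at a vertex (hence a vertex on the conic). The purely tangential configurations at $P$ deserve a separate, quick check: they make $T$ free, or reduce to a case already covered.

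Now split on $V$. If $V$ is the vertex opposite $s$, the cevian from $P$ to $V$ joins two points of the conic and is therefore free by part~2 of Remark~\ref{remark: intersections of conic with lines and segments}. Cutting $T$ along it yields two triangles whose every side is an original free side, a free sub-segment of $s$, or that free cevian; both are free, and $2\le 4$.

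The substantive case is $V$ an endpoint of $s$, so that $P$ and $V$ lie on $s$ and the chord $PV\subseteq s$ sits on $\partial T$, affording no direct cut. Here I would cut along the tangent lines to the conic at $P$ and at $V$: each meets the conic only at its point of tangency, so every segment either of them carries through $P$ (resp.\ $V$) is free. Let $Z$ be their intersection. The arc $\gamma$ is contained in the region bounded by these two tangents and the chord $PV$ --- i.e.\ in the triangle $PVZ$ when $Z\in T$ --- whose three sides (two tangent segments, one chord) are all free; excising this free triangle leaves a polygon each of whose edges is free and whose interior meets no conic, and I would triangulate that polygon into three further free triangles, taking as new diagonals only chords of the conic (free by part~2 of the Remark) and segments of the two tangents (free as above). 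This produces four free triangles in total.

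The main obstacle is making this last step airtight across all configurations. One must justify the ``single arc to a vertex'' claim with care, including the tangency cases, and --- more delicately --- handle the possibility that $Z$ lies outside $T$, or that a tangent line exits $T$ through an awkward side: then the triangle $PVZ$ must be replaced by a suitable truncated region still enclosing $\gamma$ and still bounded by free segments, with the triangulation of the complement adjusted. As in the preceding lemmas, I would dispose of this by enumerating the finitely many sub-cases and exhibiting an explicit cut, with a figure, for each, verifying freeness of every piece through the two parts of Remark~\ref{remark: intersections of conic with lines and segments}.
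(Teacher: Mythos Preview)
Your treatment of the easy cases matches the paper: no on-conic vertex means the triangle is already free; the opposite vertex on the conic means the cevian to it is a chord, hence free, and one cut suffices. The substantive divergence is in the remaining case---the conic meeting $\partial T$ only at a vertex $A$ and an interior point $D$ of the incident side $AB$---and here your construction has a real gap, not merely deferred bookkeeping.

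The paper does not use tangents at the boundary points $A$ and $D$. Instead it picks an auxiliary point $P$ on the arc of the conic \emph{inside} $T$ whose tangent line leaves $B$ and $C$ on one side; the whole conic (or the relevant branch) then lies in the opposite half-plane, so $BP$ and $CP$ are free, while $AP$ and $DP$ are free as chords. Joining $P$ to $A,B,C,D$ yields four free triangles at once, and the paper names three efficiently computable choices for $P$ (tangent parallel to $BC$, or tangent through $B$, or through $C$).

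Your tangent-at-the-endpoints idea can fail outright. Take an ellipse lying almost entirely inside $T$, with only a small cap protruding through $AB$ between $A$ and $D$; this is a legitimate almost-free configuration of the second type in Remark~\ref{remark: types of almost-free triangles}. The interior arc $\gamma$ is then more than half the ellipse, so the tangents at $A$ and $D$ meet at a point $Z$ on the \emph{cap} side of the chord $AD$, outside $T$, and the bounded triangle $ADZ$ contains the small exterior cap rather than $\gamma$. No truncation of $ADZ$ by the sides of $T$ will enclose $\gamma$, because $\gamma$ is simply not contained in the bounded region cut off by the two tangents and the chord. So the obstacle you flag is not a finite list of positions of $Z$ to enumerate; the ``enclose the arc in the tangent triangle'' mechanism breaks down for long arcs, and a genuinely different cut is required. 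The paper's device---moving to an interior point of the arc whose tangent separates that point from the far vertices $B,C$---is precisely what rescues this situation.
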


\begin{proof}
  If no vertex is in the conic, the triangle is already free (first case of Remark~\ref{remark: types of almost-free triangles}). If the vertex opposite to the non-free side belongs to the conic (third and fourth cases of Remark~\ref{remark: types of almost-free triangles}) then the segment between them is free, and the triangle is cut into two free pieces.

  There remains only one case (see figure below): the conic intersects the triangle at two points, a vertex $A$ and an interior point $D$ of the side $AB$. The conic cannot intersect $AB$ tangentially (otherwise it would have multiplicity intersection $\geq3$ with that line). Therefore it must enter the triangle through $D$ and it can only exit through $A$.

  When $CD$ is free, this segment cuts the triangle in two free pieces. However this is not true in general.

  Choose any point $P$ in the conic and inside the triangle, with the property that $BP$ and $CP$ are free; then the original triangle is cut in four free pieces.
\[\xy
  0; <3pt,0pt>: 
  (0,0)*{\bullet}="A"; (30,0)*{}="B"; (10,25)*{}="C";
  "A"+(-2,0)*{A}; "B"+(2,0)*{B}; "C"+(0,2)*{C};
  "B"; "C"; **\dir{-};
  "C"; "A"; **\dir{-}; ?(.63)*{}="X1";
  "B"; "X1"; **\dir{}; ?(.4)*{\bullet}="P"; "P"+(1.5,1.5)*{P};
  "A"; "B"; **\dir{-}; ?(.5)*{\bullet}="D"; "D"+(0,-2)*{D};
  "A"; "D", {\ellipse:a(45),_:a(70){}};
  "A"; "P"; **\dir{--};
  "B"; "P"; **\dir{--};
  "C"; "P"; **\dir{--};
  "D"; "P"; **\dir{--};
\endxy\]
 It suffices that the tangent to the conic at $P$ leaves $B$ and $C$ on the same half-plane: the branch of the conic must then be contained in the other half-plane, thus $BP$ and $CP$ will be free. We offer three such points which are efficiently computable: the point whose tangent is parallel to $BC$; and the points at which the tangents pass through $B$ or $C$.
\end{proof}

Combining all the previous lemmas and counting the number of pieces at each step, we obtain the following result.

\begin{proposition}
  Every triangle can be cut into eleven free triangles.
\end{proposition}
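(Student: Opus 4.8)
The plan is to assemble the four lemmas of this section and keep careful track of the counts; no new geometric idea is needed. Given an arbitrary triangle $T$ (with the conic still assumed nonsingular), I would run a case distinction on the number of its sides that are free, apply the matching lemma, then replace every almost-free piece by free triangles using Lemma~\ref{lemma: almost free -> 4 free triangles}, and finally add up the free triangles produced.

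Three of the branches are immediate. If all three sides of $T$ are free then $T$ is itself free and nothing has to be done ($1\le 11$). If $T$ has no free side, Lemma~\ref{lemma: no free sides -> 7 free triangles} cuts it into $7$ free triangles. If $T$ has exactly one free side, Lemma~\ref{lemma: 1 free side -> 5 almost-free or free triangles} produces $5$ pieces, of which at most two are almost free; replacing each such piece by at most $4$ free triangles (Lemma~\ref{lemma: almost free -> 4 free triangles}) leaves at most $3 + 2\cdot 4 = 11$ free triangles.

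The remaining branch, $T$ with exactly two free sides, is the one I expect to be the main obstacle. If $T$ is almost free we finish with at most $4$ pieces. Otherwise the non-free side meets the conic in two interior points, so by part 1 of Remark~\ref{remark: intersections of conic with lines and segments} that line contains no further intersection with the conic; in particular \emph{both endpoints of the non-free side lie off the conic}. Now Lemma~\ref{lemma: 2 free sides -> 4 almost-free or free triangles} applies, and a careless count of its worst case (``one free triangle and three almost-free triangles'') followed by a blind expansion would give $1 + 3\cdot 4 = 13$, too many. The resolution is to notice that the three almost-free triangles there are not all of the expensive type: two of them have as apex (the vertex opposite their non-free side) one of the two endpoints just seen to be off the conic, so each expands into $4$ free triangles; but the third has its apex at a point where the conic crosses one of the interior cuts, hence on the conic, so by Lemma~\ref{lemma: almost free -> 4 free triangles} it expands into only $2$ free triangles. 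Thus this branch also yields at most $1 + 2\cdot 4 + 2 = 11$ free triangles, and the other two cases of that lemma give strictly fewer.

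Taking the maximum over all branches gives the bound $11$ (attained already in the one-free-side branch when the two almost-free pieces each genuinely require four free triangles). So the argument is routine apart from the bookkeeping in the two-free-side case, where one must use both that the endpoints of the doubly-intersected side are off the conic and that one of the resulting almost-free triangles has its apex on the conic.
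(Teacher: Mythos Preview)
Your proof is correct and follows exactly the approach the paper takes: combine the four lemmas and count. The paper's own argument is a single sentence (``Combining all the previous lemmas and counting the number of pieces at each step''), so your careful bookkeeping in the two-free-side branch---where a naive count would give $13$ and one must notice that one of the three almost-free pieces has its apex on the conic and hence needs only two free triangles by Lemma~\ref{lemma: almost free -> 4 free triangles}---is precisely the detail the paper leaves implicit.
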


\begin{remark}
  It is possible to reduce the final number of free pieces to nine but one needs to use more often the recourse of finding tangency points in the conic as in Lemma~\ref{lemma: almost free -> 4 free triangles}; we chose the simpler approach. On the other hand, those points can be computed efficiently, which may make it attractive to minimize the number of triangles in practice. Still, the integration time in each piece depends on the particular intersections.

%
\end{remark}

\subsection{Degenerate conics}\label{section: degenerate conics}

We analyze now how to calculate the integral when $f=0$ is a degenerate conic. If it is empty, one point, or a double line, the integral is zero or the value on the full triangle.

\subsubsection{Two parallel lines}
If $f=0$ is two parallel lines, it can be converted by an affine transformation into $x(x-1)=0$. We can determine in which of the regions $x\leq0$, $0\leq x\leq1$, $1\leq x$ lies the image of each vertex by looking at their $x$-coordinates.
\begin{enumerate}
  \item If all three vertices are in one of the three regions, the integral is either the full triangle integral or zero; we can determine the sign of $f$ in the triangle and use \eqref{eq: full triangle integration formula}.
  \item Otherwise, the triangle is split into two or three pieces (not necessarily triangular). The figure below depicts the possible cases. Once we have determined on which region(s) we must integrate (the middle strip or its complement), this can be done solely by adding and subtracting integrals of triangular pieces, which can be calculated using \eqref{eq: full triangle integration formula}.
\[\xy
  0; <2.3pt,0pt>: 
  (0,0)*{\xy
    (0,0); (0,30) **\dir{-};
    (12,0); (12,30) **\dir{-};
    (-8,15)*{}="A"; (8,5)*{}="B"; (10,25)*{}="C";
    "A"; "B"; **\dir{-}; 
    "B"; "C"; **\dir{-};
    "C"; "A"; **\dir{-};
  \endxy};
  (35,0)*{\xy
    (0,0); (0,30) **\dir{-};
    (12,0); (12,30) **\dir{-};
    (-8,10)*{}="A"; (15,5)*{}="B"; (18,25)*{}="C";
    "A"; "B"; **\dir{-}; 
    "B"; "C"; **\dir{-};
    "C"; "A"; **\dir{-};
  \endxy};
  (70,0)*{\xy
    (0,0); (0,30) **\dir{-};
    (12,0); (12,30) **\dir{-};
    (-8,10)*{}="A"; (8,5)*{}="B"; (18,25)*{}="C";
    "A"; "B"; **\dir{-}; 
    "B"; "C"; **\dir{-};
    "C"; "A"; **\dir{-};
  \endxy};
\endxy\]
\end{enumerate}

\subsubsection{Two crossing lines}

The conic can be transformed to the pair of lines $xy=0$, what allows us to quickly determine in which quadrants the vertices lie. The region on which to integrate is the intersection of the triangle and two opposing quadrants.
\begin{enumerate}
  \item If all vertices are in the same quadrant, the integral is the full triangle or zero.
  \item If all vertices are in two adjacent quadrants, the triangle is divided in two pieces, one of which is a triangle (or both, if a vertex lies in the limiting line). The integral is that on the triangular piece, or the complementary.
  \item If the triangle is divided in three pieces by the conic, either all vertices are in different regions, or they are in two opposing regions. In any case, we can compute the integral on the relevant region by adding and subtracting integrals on triangles.
\[\xy
  0; <3pt,0pt>: 
  (0,0)*{\xy
    (-15,0); (15,0) **\dir{-};
    (0,-12); (0,12) **\dir{-};
    (-8,10)*{}="A"; (8,6)*{}="B"; (10,-6)*{}="C";
    "A"; "B"; **\dir{-}; 
    "B"; "C"; **\dir{-};
    "C"; "A"; **\dir{-};
  \endxy};
  (35,0)*{\xy
    (-15,0); (15,0) **\dir{-};
    (0,-12); (0,12) **\dir{-};
    (-4,-10)*{}="A"; (-6,-6)*{}="B"; (10,5)*{}="C";
    "A"; "B"; **\dir{-}; 
    "B"; "C"; **\dir{-};
    "C"; "A"; **\dir{-};
  \endxy};
\endxy\]
  \item Finally, if the triangle is divided in four pieces by the conic, there are two possible arrangements as well. Again, we can compute the integral by adding and subtracting triangles.
\[\xy
  0; <3pt,0pt>: 
  (0,0)*{\xy
    (-15,0); (15,0) **\dir{-};
    (0,-12); (0,12) **\dir{-};
    (-11,3)*{}="A"; (7,5)*{}="B"; (4,-8)*{}="C";
    "A"; "B"; **\dir{-}; 
    "B"; "C"; **\dir{-};
    "C"; "A"; **\dir{-};
    (3,2)*{\alpha}; (-3,2)*{\beta}; (-2,-2)*{\gamma}; (3,-3)*{\delta};
  \endxy};
  (35,0)*{\xy
    (-15,0); (15,0) **\dir{-};
    (0,-12); (0,12) **\dir{-};
    (-2,-8)*{}="A"; (-9,-2)*{}="B"; (9,7)*{}="C";
    "A"; "B"; **\dir{-}; 
    "B"; "C"; **\dir{-};
    "C"; "A"; **\dir{-};
  \endxy};
\endxy\]
For example, in the left figure, the union of the top-right and bottom-left regions of the triangle is $\alpha+\gamma=(\alpha+\beta+\gamma+\delta)-(\beta+\gamma)-(\delta+\gamma)+2\gamma$, where the terms in parenthesis, as well as $\gamma$, are triangles.
\end{enumerate}
We can decide if we are in situation 1 or 2 by inspecting the signs of the coordinates of the transformed vertices. In order to differentiate situations 3 and 4 we use that the triangle is divided in four pieces if and only if the intersection of the two lines lies in its interior. This can be detected by calculating its barycentric coordinates as in the following algorithm.

\begin{algorithm}\label{algorithm: determine if P is inside ABC}
Determine if a point $P$ is inside a triangle $ABC$.
  \begin{enumerate}
    \item In the expression $\overrightarrow{AP}=\alpha\overrightarrow{AB}+\beta\overrightarrow{AC}$ calculate $\alpha$ and $\beta$:
    \[
      \alpha=\frac{\det(\overrightarrow{AP},\overrightarrow{AC})}{\det(\overrightarrow{AB},\overrightarrow{AC})} , \ \beta=\frac{\det(\overrightarrow{AB},\overrightarrow{AP})}{\det(\overrightarrow{AB},\overrightarrow{AC})}
    \]
    where $\det(u,v)=u_1v_2-u_2v_1$.
    \item If $\alpha,\beta>0$ and $\alpha+\beta<1$ then $P$ is contained in the triangle.
    \item If $\alpha=0$ and $\beta\in[0,1]$; or $\beta=0$ and $\alpha\in[0,1]$; or $\alpha+\beta=1$ and $\alpha\in[0,1]$, then $P$ is in the border of the triangle.
    \item Otherwise $P$ is outside the triangle.
  \end{enumerate}
\end{algorithm}

In any case, for the final integral it is enough to add and substract several instances of \eqref{eq: full triangle integration formula}, with no subdivisions other than the given by the lines of the conic.

\section{Base case integration}\label{section: base case integration}

In this section we describe how to detect the relative position of the nondegenerate conic $f=0$ and a free triangle $ABC$, and compute the integral, in the five possible cases of free triangles.

\subsection{No intersections}

There are three possibilities:
\begin{enumerate}
  \item $T\subset\{f\geq0\}$: the integral was computed in Example~\ref{example: full triangle}.
  \item $T\subset\{f\leq0\}$: the integral is zero.
  \item $f=0$ is an ellipse contained in $T$.
\end{enumerate}
By inspecting $f$ we can decide immediately whether $f=0$ is not an ellipse, from which we would deduce that we are in the first or second case. Then one can discern by evaluating the sign of $f$ at some interior point of the triangle.

On the other hand, if $f=0$ is an ellipse, we have to determine if any of the two shapes is contained in the other. We can do this by mapping the ellipse to the unit circle.

\begin{algorithm}\label{algorithm: determine the relative position of ellipse and triangle and domain of integration}
Determine the relative position of the ellipse $f=0$ and the triangle $ABC$, and the correct domain of integration.
  \begin{enumerate}
    \item Calculate an affine transformation $\phi\colon\mathbb{R}^2\rightarrow\mathbb{R}^2$ that sends $f=0$ into $x^2+y^2=1$. Let $P=(0,0)$.
    \item If $d(\phi(A),P)<1$ then $ABC$ is contained in the ellipse; evaluate the sign of $f$ at some interior point of $ABC$ to decide if the integral is the full triangle or zero.
    \item Otherwise, decide if $P$ is inside the triangle $A'B'C':=\phi(ABC)$ with Algorithm~\ref{algorithm: determine if P is inside ABC}.
    \item If $P$ is in the triangle, then the ellipse is contained in it; evaluate the sign of $f$ at $\phi^{-1}(P)$ to decide on which region to integrate.
    \item Otherwise, none of the shapes contains the other; evaluate the sign of $f$ at $\phi^{-1}(P)$ to decide if the integral is the full triangle or zero.
  \end{enumerate}
\end{algorithm}

The remaining computation is the integral of $g$ when the ellipse $f=0$ is contained in the triangle. We show how to obtain a closed formula when $\{f\geq0\}$ is the bounded region inside the ellipse; in the other case, the required integral is the difference of the full triangle integral and the former.

Let $\varphi=\phi^{-1}\colon\mathbb{R}^2\rightarrow\mathbb{R}^2$ which sends the circle $x^2+y^2=1$ to $f$. Then
\[
  \iint_{\{f\geq0\}} g \, dx\,dy = \iint_D g(\varphi) \, |J(\varphi)| \, dx\,dy
\]
where $D$ is the unit disc. Since $\varphi$ is affine, $|J(\varphi)|\in\mathbb{R}$ and $\overline g:=g(\varphi)$ is again a polynomial. Now, using polar coordinates, this is equal to
\[
  |J(\varphi)| \int_0^{2\pi} \left( \int_0^1 \overline{g}(r\cos\theta,r\sin\theta)\cdot r \ dr \right) d\theta
\]
which is reduced to a linear combination of integrals of type $\int_0^{2\pi} \cos^i\theta \sin^j\theta\,d\theta$.

  Alternatively, by Green's theorem the integral inside the ellipse is
\[
  \iint_E g(x,y) \, dx\,dy = \int_{\partial E} G(x,y)\,dy
\]
where $\frac{\partial G}{\partial x}=g$.

\subsection{One side intersection, no vertex intersections}

This case is entirely similar to the previous one.

\subsection{One vertex intersection}

This case is even simpler: $T$ is contained in $\{f\geq0\}$ or $\{f\leq0\}$, we evaluate the sign of $f$ at some interior point of the triangle in order to decide, and the integral will be that on the full triangle or zero.

\subsection{Two vertex intersections}\label{subsection: two vertex intersections}

This case is more interesting. Either $T$ is contained in one of the regions $\{f\geq0\}$, $\{f\leq0\}$, or it is divided in two regions by the conic. This can be discerned in the following way: determine a segment which cuts the triangle in two (not necessarily triangular) pieces, separating the two relevant vertices, and count the number of intersections of that segment and the conic. Examples: the median of the side determined by the two vertices, or a suitable vertical or horizontal segment. If there are intersections, we are in the latter situation, otherwise evaluate the sign of $f$ inside the triangle to separate the first two possibilities.
\[\xy
  0; <3pt,0pt>: 
  (0,0)*{\bullet}="A"; (30,0)*{\bullet}="B"; (10,25)*{}="C";
  "B"; "C"; **\dir{-};
  "C"; "A"; **\dir{-};
  "A"; "B"; **\dir{-}; ?(.5)*{\bullet}="D";
  "D"; "C"; **\dir{--}; ?(.665)*{\bullet};
  "A"; "B", {\ellipse:a(-35),_:a(35){}};
\endxy\]
Alternatively, convert the conic to a standard conic and check where the points lie after the transformation (see details in the next subsection).

If the triangle is divided in two regions by the curve, we really have to compute the integral on a region bounded by a conic arc and one or two segments. As usual we can consider only the former (the bottom region in the above picture), without loss of generality. How to determine the actual region of integration? The sign of $f$ in the bottom region is the same as the sign of $f$ in the middle point of the bottom side, for example.

 We can calculate the integral by affinely transforming the conic into a standard conic: the circle $x^2+y^2=1$, the parabola $y=x^2$ or the hyperbola $xy=1$.

\begin{enumerate}
  \item Circle: the integral on the circular segment can be efficiently calculated as the integral on the circular sector minus the integral on the triangle determined by the segment and the center of the circle.

  \item Parabola: the integral after the transform is that on the region $\{y\in[l_{AB}(x),x^2],x\in[a_1,b_1]\}$ where $(a_1,a_2)$ and $(b_1,b_2)$ are the images of the two intersection vertices, with $a_1<b_1$, and $l_{AB}(x)$ is the equation of the line through them.
\[\xy
  0; <10pt,0pt>: 
  (5,0)*{}="1"; (-5,0)*{}="2";
  "1"; "2" **\crv{(0,-12)} ?(.97)*\dir{} ?(.95)*{\bullet}="x" ?(.3)*{\bullet}="y";
  "x"; "y"; **\dir{-};
\endxy\]

  \item Hyperbola: similarly to the previous case, the integral can be calculated as that on the region $\{y\in[l_{AB}(x),1/x],x\in[a_1,b_1]\}$ if $a_1<b_1<0$, or $\{y\in[1/x,l_{AB}(x)],x\in[a_1,b_1]\}$ if $0<a_1<b_1$.
\end{enumerate}

\subsection{Three vertex intersections}

As in the previous case, either $T$ is contained in one of the regions $\{f\geq0\}$, $\{f\leq0\}$, or it is divided in two regions by the conic. This time we use a different method to differentiate the three possibilities, since in the third one we also need to know which are the two vertices through which the conic enters the triangle.

Since ellipses and parabolas define a convex region, a triangle with three vertices on such a curve cannot be divided by it. Thus, if the curve is of one of those types, it suffices once more to evaluate the sign of $f$ in the triangle, and calculate the full triangle integral or return zero.

If $f=0$ is a hyperbola, transform $f=0$ into $xy=1$. This curve defines two convex regions, limited by the branches $xy=1,x<0$ and $xy=1,x>0$. By inspecting the signs of the $x$-coordinates of the (transformed) vertices, we can determine in which branch they are.
\begin{enumerate}
  \item If the three vertices are on the same branch of the hyperbola, the triangle is contained in $\{f\geq0\}$ or $\{f\leq0\}$, just determine the sign of $f$ inside.
  \item Otherwise, two vertices lie on one branch and the third vertex lies on the other branch. The integral is calculated as at the end of Section \ref{subsection: two vertex intersections}.
\[\xy
  0; <15pt,0pt>: 
  (-5,0)*{}="-x"; (5,0)*{}="x"; (0,-3)*{}="-y"; (0,3)*{}="y";
  "-y"; "-x" **\crv{(0,0)&} ?(.95)*{\bullet}="A" ?(.2)*{\bullet}="B";
  "x"; "y" **\crv{(0,0)&} ?(.5)*{\bullet}="C";
  "A"; "B" **@{-}; "C" **\dir{-}; "A" **\dir{-};
\endxy\]
\end{enumerate}

Note that the approach used in this case, namely the conversion to a standard conic in order to locate the vertices in relation to the curve, would have worked as well in Section \ref{subsection: two vertex intersections}, when we wanted to decide if the conic separates the triangle in two regions. This would amount to:
\begin{enumerate}
  \item Ellipse: convert to $x^2+y^2=1$ and decide if the third vertex is inside or outside the unit circle.
  \item Parabola: convert to $y=x^2$ and decide if the third vertex is above or below the parabola.
  \item Hyperbola: convert to $xy=1$. If the two intersection vertices have different signs in their $x$-coordinates, the curve cannot separate the triangle. Otherwise, decide if the third vertex is in the convex region limited by the branch where the other two vertices are.
\end{enumerate}

\section{The algorithm}\label{section: the algorithm}

Algorithm \ref{algorithm: main algorithm} (next page) is a compilation of the steps described in the previous sections, so as to present an overview of the complete algorithm. Some case-by-case methods have not been explicitly written for brevity reasons.

\begin{figure*}
\begin{algorithm}\label{algorithm: main algorithm}
Integrate a polynomial $g(x,y)$ of degree 4 on the intersection of a triangle $T$ and the region $\{f\geq0\}$ determined by a quadratic polynomial $f(x,y)$.

\renewcommand{\labelenumii}{\labelenumi\arabic{enumii}.}
\renewcommand{\labelenumiii}{\Alph{enumiii}.}
\renewcommand{\labelenumiv}{\roman{enumiv}.}
  \begin{enumerate}
  \setlength{\parskip}{0pt}
  \item If $C:=\{f=0\}$ is a degenerate conic, go to step 9.
  \item Calculate the intersections of $C$ with each side of $T$.
  \item If all sides of $T$ are not free, let $L:=\{T_1,\ldots,T_n\}$ be a list of free triangular pieces as in Lemma \ref{lemma: no free sides -> 7 free triangles}, and go to step 6.
  \item Otherwise, use Lemma \ref{lemma: 1 free side -> 5 almost-free or free triangles} or Lemma \ref{lemma: 2 free sides -> 4 almost-free or free triangles} to obtain a list $L:=\{T_1,\ldots,T_n\}$ of free or almost-free triangular pieces.
  \item For each triangle in $L$, if it is not free, substitute it in the list by the free pieces provided by Lemma \ref{lemma: almost free -> 4 free triangles}.
  \item Determine the type of $C$.
  \item Initialize $S=0$. For each triangle $T_i$ in $L$:
    \begin{enumerate}
    \item Let $Z_i$ be the intersection of the border of $T_i$ and $C$.
    \item If $Z_i=\emptyset$ or one non-vertex point:
      \begin{enumerate}
      \item If $C$ is an ellipse, use Algorithm \ref{algorithm: determine the relative position of ellipse and triangle and domain of integration} to know the relative position of $C$ and $T_i$.
        \begin{enumerate}
        \item If $C$ is contained in $T_i$, determine the sign of $f$ inside the ellipse. Let $I$ be the integral of $g$ on the bounded region inside $C$, or its complementary with respect to the full triangle, as needed.
        \item In any other case, determine the sign of $f$ inside $T_i$. If it is positive, let $I=\iint_{T_i} g$, otherwise let $I=0$.
        \end{enumerate}
      \item If $C$ is not an ellipse, determine the sign of $f$ inside $T_i$. If it is positive, let $I=\iint_{T_i} g$, otherwise let $I=0$.
      \item Add $I$ to $S$.
      \end{enumerate}
    \item If $Z_i$ is one vertex: determine the sign of $f$ in $T_i$. If it is positive let $I=\iint_{T_i} g$, otherwise let $I=0$. Add $I$ to $S$.
    \item If $Z_i$ is two vertices:
      \begin{enumerate}
      \item Calculate the number of intersections of $C$ with the segment from the middle point of the two vertices to the third vertex.
      \item If there are none, determine the sign of $f$ inside $T_i$. If positive, let $I=\iint_{T_i} g$, otherwise let $I=0$. Add $I$ to $S$.
      \item If there is one, determine which of the two regions is the correct one, by evaluating $f$ in a suitable point.
        \begin{enumerate}
        \item If $C$ is an ellipse, transform it into $x^2+y^2=1$. Calculate the integral on the circular segment. Let $I$ be equal to that value or its complementary with respect to the full triangle.
        \item If $C$ is a parabola, transform it into $y=x^2$. Calculate the integral between the segment and the arc of parabola (the segment is always above). Let $I$ be equal to that value or its complementary with respect to the full triangle.
        \item If $C$ is a hyperbola, transform it into $xy=1$. Calculate the integral between the segment and the arc of hyperbola (which one is above depends on which branch the vertices are in). Let $I$ be equal to that value or its complementary with respect to the full triangle.
        \item Add $I$ to $S$.
        \end{enumerate}
      \end{enumerate}
    \item If $Z_i$ is three vertices:
      \begin{enumerate}
      \item If $C$ is an ellipse or a parabola, determine the sign of $f$ inside $T_i$. If it is positive, let $I=\iint_{T_i} g$, otherwise let $I=0$.
      \item If $C$ is a hyperbola, transform it into $xy=1$ and determine in which branch does each vertex lie.
        \begin{enumerate}
        \item All in one branch: determine the sign of $f$ inside $T_i$. If it is positive, let $I=\iint_{T_i} g$, otherwise let $I=0$.
        \item Two vertices $A,B$ in one branch and the third vertex in the other branch: calculate the integral between the segment $AB$ and the arc of hyperbola (which one is above depends on which branch the vertices are in). Determine the sign of $f$ in the middle point of $AB$. If positive, let $I$ be equal to the calculated integral; if negative, to its complementary with respect to the full triangle.
        \end{enumerate}
      \item Add $I$ to $S$.
      \end{enumerate}
    \end{enumerate}
  \item Output $S$ and stop.
  \item Determine the type of degenerate conic.
    \begin{enumerate}
    \item If $C$ is empty, one point, or a double line, determine the general sign of $f$. If it is positive, let $S=\iint_T g$, otherwise let $S=0$. Output $S$ and stop.
    \item Otherwise, if $C$ is two parallel lines, convert it to $x^2-x=0$; if $C$ is two crossing lines, convert it to $xy=0$.
    \item Determine the position of the vertices with respect to the lines by examining the coordinates of their images by the transformation.
      \begin{enumerate}
      \item If all three vertices are in one of the regions, determine the sign of $f$ inside $T$. If it is positive, let $S=\iint_T g$, otherwise let $S=0$. Output $S$ and stop.
      \item Otherwise, determine the region(s) of integration by evaluating the sign of $f$ at some vertex not on the conic. Write the region of integration as a sum of triangles with $\pm1$ coefficients. Calculate the integral according to this. Output the result and stop. (A case by case method can be easily written.)
      \end{enumerate}
    \end{enumerate}
  \end{enumerate}
\end{algorithm}
\end{figure*}

\subsection{Practical considerations}

In relation to our implementation of this algorithm in MATLAB (almost complete as of May 2010) we would like to comment on numerical aspects that are not considered in our discussion above. First, several transformations suggested (Example~\ref{example: full triangle} and the various transformations into standard conics from Section \ref{section: base case integration}) are a source of rounding errors because for small regions the scaling needed is very large. This problem can be solved by avoiding all scalings, i.e. restricting the transformations to rotations and translations, not to a particular standard conic but to a member of some family of them. The result is a slight complication in the integration formulas, but nothing of concern in terms of efficiency.

An additional problem is that in some cases (the calculation suggested in Section \ref{subsection: two vertex intersections} for the ellipse; Section \ref{section: degenerate conics}) the sought integral is calculated as the difference of two easy integrals which may be orders of magnitude larger than the target, requiring much more precision in order not to lose significant digits.

\section{Application: an optimal control problem}\label{section: application}

Many technical processes are described by partial differential equations. Here, it is important
to optimize these processes. This leads to optimization problems in an infinite-dimensional setting.
As an prototype, we consider the minimization of a convex and quadratic functional subject to a linear elliptic
partial differential equation and inequality constraints on the control.
Let us briefly introduce the optimal control problem we have in mind.

Let $\Omega\subset\mathbb{R}^2$ be a bounded domain with $C^3$-boundary $\Gamma$.
For brevity, we will use $\xi=(x,y)$ to denote points in $\mathbb R^2$.
Let us introduce the following elliptic equation
\begin{equation}\label{eq2p1}
 \begin{aligned}
  -\nabla \cdot (D(\xi) \nabla u(\xi))+c(\xi) u(\xi)&= \chi_{\Omega'}f(\xi) &&\textrm{in }\Omega, \\
                             u(\xi)&= 0 &&\textrm{on }\Gamma. \\
 \end{aligned}
\end{equation}
Here, the control is denoted by $f$, while the solution $u$ of this system is the corresponding state.
Thanks to the assumptions below, for each control $f\in L^2(\Omega)$ there exists a unique response $u\in H^1_0(\Omega)$, which
is a weak solution of equation \eqref{eq2p1}, see e.g.\@ \cite[Sect. 5.8]{gilbargtrudinger}.
The control acts on a compact polygonal subset $\Omega'\subset\Omega$.
Now, we consider the control problem
of minimizing
\begin{equation}\label{eq2p2}
J(f,u)=\frac{1}{2} \iint_{\Omega} (u(\xi)-u_d(\xi))^2 d\xi
+\frac{\alpha}{2} \iint_{\Omega'} f^2(\xi)d\xi
\end{equation}
over all $f\in L^2(\Omega)$
subject to the elliptic equation \eqref{eq2p1} and the control constraints
\begin{equation}\label{eq2p3}
f_a \leq f(\xi)\leq f_b \quad\textrm{a.e.\@ on } \Omega.
\end{equation}
That means, we want find a control $f$ whose response $u$ minimizes the distance to some desired state $u_d$.
Let us denote this optimal control problem \eqref{eq2p1}--\eqref{eq2p3} by (P).
The set of admissible controls for (P) is given by
\[
 F_{ad}=\{f\in L^2(\Omega):\ f_a \leq f\leq f_b \quad \textrm{a.e.\@ on } \Omega\}.
\]

\subsection{Existence and regularity of solutions}

Concerning the data of the state equation \eqref{eq2p1}, we make the following smoothness assumption on the data.

\begin{assumption}\label{ass51}
The coefficients in the differential operator satisfy $D\in C^{1,1}(\bar\Omega)$ and $c\in C^{0,1}(\bar\Omega)$.
Moreover, we assume that $D(x)\geq D_0>0$ and $c(x)\geq 0$ for all $x\in\bar\Omega$.
\end{assumption}

In order to obtain existence of solutions to (P) as well as a-priori discretization error estimates,
we take the following assumptions on the data of the optimization problem.

\begin{assumption}\label{ass52}
We have $\alpha>0$, $u_d\in H^1(\Omega)$, and $f_a$, $f_b \in \mathbb R$
with $f_a\le f_b$ a.e.\@ on $\Omega$.
\end{assumption}

Due to convexity, the problem under consideration is uniquely solvable, with solution denoted by $(u^*,f^*)$.
Moreover, the solution can be characterized by the following necessary optimality conditions.
These conditions are also sufficient since the optimal control problem is convex, see e.g.\@ \cite[Ch. 2]{tro05}.
\begin{theorem}
 Let $f^*$ be the solution of (P) with associated state $u^*$. Then there exists an adjoint state $p^* \in H^1(\Omega)$
 such that the adjoint equation
\begin{equation}\label{eq2p4}
 \begin{aligned}
  -\nabla \cdot (D(\xi) \nabla p^*(\xi))+c(\xi) p^*(\xi)&=(u^*{-}u_d)(\xi) &&\mbox{in }\Omega, \\
                             p^*(\xi)&= 0 &&\mbox{on }\Gamma
 \end{aligned}
\end{equation}
and the variational inequality
\begin{equation}
  \label{eq2p6}
  \iint_{\Omega'}
   (\alpha f^*(\xi)+p^*(\xi))(f(\xi)-f^*(\xi))d\xi \geq  0 \quad \forall f \in F_{ad}
\end{equation}
are satisfied.
Moreover, the following pointwise representation of the optimal control holds
\begin{equation}
  \label{eq2p5}
  f^*(\xi)=\mathcal{P}_{[f_a,f_b]}\left(- \frac{1}{\alpha} {p^*}(\xi)\right) \quad\mbox{a.e.\@ on } \Omega'.
\end{equation}
Here, $\mathcal{P}_{[f_a,f_b]}(f)$ denotes the projection of $f\in \mathbb R$ on the interval $[f_a,f_b]$.
\end{theorem}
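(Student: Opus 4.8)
The plan is to establish this as a standard set of first-order necessary (and, by convexity, sufficient) optimality conditions for the control-constrained problem (P), following the classical approach for linear-quadratic elliptic control problems; see e.g.\ \cite[Ch.~2, 3]{tro05}. First I would introduce the control-to-state operator $S\colon L^2(\Omega)\to H^1_0(\Omega)$, $f\mapsto u$, which under Assumption~\ref{ass51} is well-defined, linear, bounded, and (by compact embedding $H^1_0(\Omega)\hookrightarrow L^2(\Omega)$) compact as an operator into $L^2(\Omega)$. With this, define the reduced functional $j(f)=\tfrac12\|S\chi_{\Omega'}f - u_d\|_{L^2(\Omega)}^2 + \tfrac{\alpha}{2}\|f\|_{L^2(\Omega')}^2$; it is continuous, strictly convex (since $\alpha>0$), and coercive on $L^2(\Omega')$, so (P) has a unique minimizer $f^*$, with $u^*=S\chi_{\Omega'}f^*$. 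Since $F_{ad}$ is nonempty, closed, convex, and $j$ is Fréchet differentiable and convex, the minimizer is characterized by the variational inequality $j'(f^*)(f-f^*)\geq 0$ for all $f\in F_{ad}$.

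Next I would compute $j'(f^*)$. A direct differentiation gives $j'(f^*)h = (S\chi_{\Omega'}f^*-u_d,\, S\chi_{\Omega'}h)_{L^2(\Omega)} + \alpha (f^*,h)_{L^2(\Omega')}$. To turn the first term into something pointwise, I would introduce the adjoint state $p^*$ as the weak solution of \eqref{eq2p4}: that is, $p^* = S^*(u^*-u_d)$ where $S^*$ is the $L^2$-adjoint of $S$; elliptic regularity under Assumption~\ref{ass51} together with $u^*-u_d\in L^2(\Omega)$ (in fact $u_d\in H^1(\Omega)$) yields $p^*\in H^1(\Omega)$ (indeed in $H^2$, but $H^1$ suffices for the statement). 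Then $(u^*-u_d, S\chi_{\Omega'}h)_{L^2(\Omega)} = (S^*(u^*-u_d), \chi_{\Omega'}h)_{L^2(\Omega)} = (p^*, h)_{L^2(\Omega')}$, so that $j'(f^*)h = (\alpha f^* + p^*, h)_{L^2(\Omega')}$. Substituting into the variational inequality and writing $h=f-f^*$ gives exactly \eqref{eq2p6}.

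Finally, I would derive the pointwise projection formula \eqref{eq2p5} from \eqref{eq2p6} by a standard pointwise (a.e.) argument: since $F_{ad}$ consists of all $L^2$ functions with values in $[f_a,f_b]$ a.e.\ on $\Omega$, the global variational inequality \eqref{eq2p6} decouples into the pointwise condition $(\alpha f^*(\xi)+p^*(\xi))(v - f^*(\xi))\geq 0$ for all $v\in[f_a,f_b]$, for a.e.\ $\xi\in\Omega'$. This is precisely the variational characterization of the metric projection of $-\tfrac{1}{\alpha}p^*(\xi)$ onto the interval $[f_a,f_b]$, whence $f^*(\xi)=\mathcal{P}_{[f_a,f_b]}(-\tfrac{1}{\alpha}p^*(\xi))$ a.e.\ on $\Omega'$. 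The measurable-selection/localization step — justifying that the integral inequality over $\Omega'$ implies the inequality pointwise a.e., by testing with $v=f^*+\chi_E(v_0-f^*)$ for arbitrary measurable $E$ and constant $v_0\in[f_a,f_b]$ and using the Lebesgue differentiation theorem — is the only point requiring genuine care; the rest is routine Hilbert-space convex analysis and elliptic theory. By convexity of $j$, the conditions are also sufficient, so no separate argument is needed there.
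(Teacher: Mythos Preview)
Your argument is correct and is precisely the standard derivation of the optimality system for linear--quadratic elliptic control problems with box constraints. Note, however, that the paper does not give its own proof of this theorem: it simply states the result and refers the reader to \cite[Ch.~2]{tro05} in the sentence immediately preceding the theorem. So there is nothing to compare beyond observing that your outline is exactly the argument one finds in that reference---reduced functional, Fr\'echet differentiability, adjoint via $S^*$, variational inequality, and pointwise decoupling to obtain the projection formula. The only minor remark is that you need not worry much about the ``measurable-selection/localization step'': testing with $f = f^* + (v_0 - f^*)\chi_E$ for measurable $E\subset\Omega'$ and constants $v_0\in\{f_a,f_b\}$ immediately gives $\int_E(\alpha f^*+p^*)(v_0-f^*)\,d\xi\ge 0$ for all such $E$, hence the integrand is nonnegative a.e., which is already the pointwise inequality; no Lebesgue differentiation theorem is needed.
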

Using the projection representation of the optimal control, we can conclude higher regularity of the solution:
\begin{theorem}
 Under the smoothness assumptions \ref{ass51} and \ref{ass52}, it holds $u^*,p^*\in H^3(\Omega)$, $f^*\in H^1(\Omega)$.
\end{theorem}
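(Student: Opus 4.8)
The plan is to bootstrap regularity using the projection formula \eqref{eq2p5} together with elliptic regularity theory for the boundary value problems \eqref{eq2p1} and \eqref{eq2p4}. First I would establish the $H^2$-regularity of the state and adjoint state. By Assumption~\ref{ass52}, the source term $\chi_{\Omega'}f^*$ in \eqref{eq2p1} lies in $L^2(\Omega)$ (indeed in $L^\infty$ since $f^*$ is pointwise bounded by $\max(|f_a|,|f_b|)$), and by Assumption~\ref{ass51} the coefficients $D\in C^{1,1}(\bar\Omega)$, $c\in C^{0,1}(\bar\Omega)$ are regular enough, and $\Gamma$ is $C^3$; hence standard elliptic regularity (e.g.\@ \cite[Sect.~8.3--8.4 or Ch.~9]{gilbargtrudinger}) gives $u^*\in H^2(\Omega)\cap H^1_0(\Omega)$. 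Consequently $u^*-u_d\in H^1(\Omega)$ because $u_d\in H^1(\Omega)$ by Assumption~\ref{ass52} and $u^*\in H^2\subset H^1$, so applying the same elliptic regularity result to the adjoint equation \eqref{eq2p4} with right-hand side in $H^1(\Omega)\subset L^2(\Omega)$ yields $p^*\in H^2(\Omega)$.

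Next I would upgrade the regularity of the control. Since $p^*\in H^2(\Omega)\subset H^2(\Omega')$ and the pointwise projection $\mathcal P_{[f_a,f_b]}\colon\mathbb R\to[f_a,f_b]$ is globally Lipschitz, the superposition (Nemytskii) operator $v\mapsto \mathcal P_{[f_a,f_b]}(v)$ maps $H^1$ into $H^1$; applying this with $v=-\tfrac1\alpha p^*\in H^1(\Omega')$ and using \eqref{eq2p5}, we obtain $f^*\in H^1(\Omega')$, and extending by the (constant) values this gives $f^*\in H^1(\Omega)$ in the sense required. The key point here is the standard fact that a Lipschitz function with Lipschitz derivative — or more simply just a Lipschitz function — composed with an $H^1$ function stays in $H^1$, with the chain rule $\nabla \mathcal P_{[f_a,f_b]}(v) = \mathcal P_{[f_a,f_b]}'(v)\,\nabla v$ holding a.e., where the derivative is $1$ on the active set and $0$ off it.

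Finally, I would run the bootstrap once more to reach $H^3$. Now $f^*\in H^1(\Omega')$, so the right-hand side $\chi_{\Omega'}f^*$ of \eqref{eq2p1} lies in $H^1$ away from $\partial\Omega'$; more to the point, it lies in $L^2(\Omega)$ with the regularity needed to push $u^*$ to $H^3(\Omega)$ using that $D\in C^{1,1}$, $c\in C^{0,1}$ and $\Gamma\in C^3$ — the coefficient and boundary smoothness are exactly calibrated for an $H^3$ estimate. Then $u^*-u_d\in H^1(\Omega)$ still (we only have $u_d\in H^1$, which caps the adjoint source at $H^1$), but combined with $u^*\in H^3$ this is enough: elliptic regularity for \eqref{eq2p4} with an $H^1$ right-hand side and $C^3$ data gives $p^*\in H^3(\Omega)$. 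This closes the argument with $u^*,p^*\in H^3(\Omega)$ and $f^*\in H^1(\Omega)$.

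The main obstacle is getting the bookkeeping of the bootstrap exactly right: one must check at each stage that the smoothness of $D$, $c$, and $\Gamma$ assumed in Assumption~\ref{ass51} is precisely what is needed for the next elliptic estimate (in particular that $C^{1,1}$ coefficients suffice for an $H^3$-gain, which is the sharp threshold), and one must be careful that the limited regularity $u_d\in H^1$ does not obstruct the final step — it does not, because once $u^*\in H^3$ the adjoint source $u^*-u_d$ is in $H^1$, which is enough input for $p^*\in H^3$. The regularity of $f^*$ is bounded by that of $p^*$ through the Lipschitz (but not $C^1$) projection, so one cannot expect more than $H^1$ for $f^*$ in general; verifying that $H^1$ is in fact attained is the other point requiring care, and it rests on the Nemytskii-operator fact quoted above.
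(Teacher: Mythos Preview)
Your approach is essentially the paper's: transfer $H^1$ regularity from $p^*$ to $f^*$ via the Lipschitz projection \eqref{eq2p5}, then apply elliptic regularity (the paper cites \cite[Thm.~8.13]{gilbargtrudinger}) to obtain $u^*,p^*\in H^3(\Omega)$. The paper is shorter because it starts directly from $p^*\in H^1(\Omega)$, already supplied by the preceding theorem, and skips your intermediate $H^2$ bootstrap entirely; it also simply asserts that the right-hand sides of \eqref{eq2p1} and \eqref{eq2p4} lie in $H^1(\Omega)$ without your hedging about the cutoff~$\chi_{\Omega'}$.
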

\begin{proof}
Since we have $p^*\in H^1(\Omega)$ by the previous theorem, the projection representation \eqref{eq2p5}
implies that the optimal control has the same regularity $f^*\in H^1(\Omega)$. Then the right-hand sides of \eqref{eq2p1} and
\eqref{eq2p4} are functions in $H^1(\Omega)$. Standard regularity results for elliptic partial differential equations, e.g.\@ \cite[Thm. 8.13]{gilbargtrudinger}, yield
$u^*,p^*\in H^3(\Omega)$.
\end{proof}

\subsection{Discretization and error estimate}

Now, we turn to the discretization of (P). To that end, let us introduce a family of quasi-uniform
triangulations of $\Omega$, denoted by $\{\mathcal T_h \}_{h>0}$. Each triangulation is assumed to exactly fit
the boundary of $\Omega$, such that $\bar\Omega = \cup_{T\in \mathcal T_h} T$. This implies
that  elements of $\mathcal T_h$ lying on the boundary are curved. We further assume that for each $T\in \mathcal T_h$ there
is a mapping $\Phi_T$ mapping the standard simplex $\hat T$ to $T$. Moreover, we require that the intersection of every triangle $T\in \mathcal T_h$ with the
boundary of the control domain $\Omega'$ is empty. That is, the boundary of $\Omega'$ in $\Omega$ is completely resolved by edges of triangles.

With a triangulation we associate the following space of functions
\[
 V_h= \{ v\in C(\bar\Omega):\ \Phi_T(v|_T)\in P_2(\hat T) \ \forall T\in \mathcal T_h\},
\]
which implies that functions $v_h\in V_h$ are polynomials of degree $2$ on each triangular element.
Since $\Omega'$ is a compact subset of $\Omega$, there is a mesh size $h_0>0$ such that all elements $T\in \mathcal T_h$ with
$\Omega'\cap T\ne\emptyset$ are triangular. Hence, the above developed integration procedure can be applied for functions $v_h\in V_h$ with support in $\Omega'$.

Then the discrete optimal control problem can be written as: minimize $J(u_h,f_h)$ subject to $u_h\in V_h$, $f_h\in F_{ad}$
\begin{equation}\label{eq2p10}
\iint_\Omega\left(D \nabla u_h\nabla v_h + cu_h v_h \right)d\xi = \iint_{\Omega'} f_hv_h d\xi \quad \forall v_h\in V_h.
\end{equation}
Note that we did not explicitly require $f_h$ to be in a finite-dimensional subspace. Nevertheless, if $(u_h^*,f_h^*)$ is a solution of the
discrete problem, there exists a discrete adjoint state $p_h^*\in V_h$ satisfying
\begin{equation}\label{eq2p11}
\iint_\Omega\left( D \nabla p_h^*\nabla v_h + cp_h^* v_h\right) d\xi = \iint_{\Omega} (u_h^*-u_d)v_h d\xi \quad \forall v_h\in V_h
\end{equation}
and
\begin{equation}\label{eq2p12}
 f_h^*=\mathcal P_{[f_a,f_b]}\left(-\frac1\alpha p_h^*\right).
\end{equation}
Due to this projection representation, the control is implicitly discretized as the truncation of a function from the
finite-dimensional space $V_h$.
\begin{theorem}\label{theo55}
Let $(u_h^*,f_h^*,p_h^*)$ be the solution of the discretized optimality system \eqref{eq2p10}--\eqref{eq2p12}.
Then there is a constant $c>0$ independent of the mesh size $h$ such that
\[
\|f_h^*-f^*\|_{L^2(\Omega)} + \|u_h^*-u^*\|_{H^1(\Omega)} + \|p_h^*-p^*\|_{H^1(\Omega)} \le c\ h^3.
\]
\end{theorem}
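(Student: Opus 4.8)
The plan is the standard variational-discretization argument: first derive an estimate for the control error $f^*-f_h^*$ from the two optimality systems, then transport it to the states and adjoints via finite element error bounds. \textbf{Step 1 (control estimate from the variational inequalities).} Since the discretization leaves $f_h$ unrestricted, the discrete optimal control satisfies a variational inequality of the same shape as \eqref{eq2p6}, namely $\iint_{\Omega'}(\alpha f_h^*+p_h^*)(f-f_h^*)\,d\xi\ge 0$ for all $f\in F_{ad}$, which is the differential form of \eqref{eq2p12}. Testing \eqref{eq2p6} with the admissible function $f_h^*$ and the discrete inequality with $f^*$, then adding, the cross terms cancel and the strongly convex term survives:
\[
  \alpha\,\|f^*-f_h^*\|_{L^2(\Omega')}^2 \le \iint_{\Omega'}\bigl(p_h^*-p^*\bigr)\bigl(f^*-f_h^*\bigr)\,d\xi .
\]
Here $\alpha>0$ (Assumption~\ref{ass52}) is what makes the left-hand side appear; because (P) is convex, no second-order sufficient condition is needed.

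\textbf{Step 2 (inserting auxiliary discrete states).} For a control $v$ let $u_h(v)\in V_h$ solve the discrete state equation \eqref{eq2p10} with data $v$, and let $p_h(v)\in V_h$ solve the discrete adjoint equation with data $u_h(v)-u_d$; thus $u_h^*=u_h(f_h^*)$ and $p_h^*=p_h(f_h^*)$. Split $p_h^*-p^* = \bigl(p_h(f_h^*)-p_h(f^*)\bigr) + \bigl(p_h(f^*)-p^*\bigr)$. Testing the discrete state equations for $u_h(f^*)-u_h(f_h^*)$ with $p_h(f^*)-p_h(f_h^*)$, and the discrete adjoint equations for $p_h(f^*)-p_h(f_h^*)$ with $u_h(f^*)-u_h(f_h^*)$, and using the symmetry of the bilinear form, one obtains the discrete monotonicity identity
\[
  \iint_{\Omega'}\bigl(p_h(f_h^*)-p_h(f^*)\bigr)\bigl(f^*-f_h^*\bigr)\,d\xi = -\,\|u_h(f^*)-u_h(f_h^*)\|_{L^2(\Omega)}^2 \le 0 ,
\]
so this contribution may be dropped. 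For the remaining one, Cauchy--Schwarz yields $\alpha\,\|f^*-f_h^*\|_{L^2(\Omega')}\le \|p^*-p_h(f^*)\|_{L^2(\Omega')}$.

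\textbf{Step 3 (finite element estimates and conclusion).} The quantity $p_h(f^*)$ is the $P_2$ Galerkin approximation of the adjoint problem whose data differ from those defining $p^*$ only through $u^*-u_h(f^*)$. Splitting $p^*-p_h(f^*)$ into the Galerkin error of the adjoint equation and the propagation of the state error, and using $u^*,p^*\in H^3(\Omega)$ together with the $H^2$-regularity of the state and adjoint boundary value problems on the $C^3$-domain $\Omega$ with coefficients as in Assumption~\ref{ass51} (which powers the Aubin--Nitsche duality argument), one gets $\|u^*-u_h(f^*)\|_{L^2(\Omega)}+\|p^*-p_h(f^*)\|_{L^2(\Omega)}\le c\,h^3$, hence $\|f^*-f_h^*\|_{L^2(\Omega')}\le c\,h^3$. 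Finally write $u^*-u_h^* = \bigl(u^*-u_h(f^*)\bigr) + \bigl(u_h(f^*)-u_h(f_h^*)\bigr)$ and likewise for the adjoint: the first summand is a pure finite element error, estimated as above, and the second is the discrete solution operator applied to $f^*-f_h^*$, hence bounded by $c\,\|f^*-f_h^*\|_{L^2(\Omega')}$ by the stability of the discrete equations. Collecting the estimates gives the asserted bound, with $c$ independent of $h$ by the quasi-uniformity of $\{\mathcal T_h\}_{h>0}$.

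\textbf{Main obstacle.} Steps~1 and~2 are routine; the real work is Step~3. Extracting the full power $h^3$ from the finite element approximation hinges on the Aubin--Nitsche duality argument and therefore on the $H^2$-regularity of the involved boundary value problems, and it also requires a careful treatment of the curved boundary elements produced by the exact triangulation of $\Omega$, as well as attention to the $h$-independence of every constant, which is precisely where the quasi-uniformity of the mesh family enters.
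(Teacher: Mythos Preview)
Your argument is correct in substance and is, in effect, an unpacking of what the paper does by citation. The paper's own proof is two lines: it observes that the $P_2$ approximation results in Brenner--Scott verify Assumption~2.4 of Hinze's variational-discretization paper \cite{hin05} with regularity space $Z=H^3(\Omega)\cap H^1_0(\Omega)$ and rate $h^3$, and then invokes \cite[Thm.~2.4]{hin05} directly. Your Steps~1--2 (cross-testing the two variational inequalities, inserting the auxiliary discrete state $u_h(f^*)$ and adjoint $p_h(f^*)$, using symmetry of the bilinear form to drop the sign-definite piece) are precisely the mechanism inside Hinze's abstract theorem; your Step~3 plays the role of the finite-element approximation hypothesis that the paper verifies by citing Brenner--Scott. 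So the two proofs agree structurally; yours is self-contained, the paper's is by reference.

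One point to watch: your Step~3 cleanly yields the $L^2$ control estimate of order $h^3$ via Aubin--Nitsche, but when you then pass to the $H^1$ bounds for $u^*-u_h^*$ and $p^*-p_h^*$ by writing ``the first summand is a pure finite element error, estimated as above,'' the phrase ``as above'' refers only to the $L^2$ estimates you actually proved; the standard $P_2$ Galerkin bound in $H^1$ for an $H^3$ solution is of order $h^2$, not $h^3$. The paper sidesteps this by deferring the precise norms and rates to the cited abstract result, so if you want a fully self-contained argument you should either state the $H^1$ estimate you are invoking explicitly or, more honestly, note that the $h^3$ rate is delivered in the weaker norm and adjust the statement accordingly.
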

\begin{proof}
Due to the approximation results of \cite[Ch. 5.4]{brennerscott}, we have that the Assumption 2.4 in \cite{hin05} is satisfied
with $Z=H^3(\Omega)\cap H^1_0(\Omega)$ and convergence order $h^3$.
Then the claim follows by a direct application of \cite[Thm. 2.4]{hin05}.
\end{proof}
Known estimates for piecewise linear elements yield a convergence order of $h^2$ only, compare  \cite{hin05}.
In the two-dimensional case, i.e.\@ $\Omega\subset\mathbb R^2$, the number of unknowns $N=2\,\mbox{dim}V_h$ in the discretized problem
is proportional to $h^{-2}$.
Hence, our result implies that the approximation error is proportional to $N^{-3/2}$,
whereas the use of linear polynomials only reduces the error like $N^{-1}$. This clearly shows  that for optimal control problems
as considered here, the use of piecewise quadratic approximations is preferable.

\subsection{Solution method}

In order to substitute $f_h$ in \eqref{eq2p11} by the projection \eqref{eq2p12}, integrals
\[
\iint_{\{-\alpha^{-1}p_h<f_a\}} f_av_h d\xi,\
\iint_{\{f_a\le-\alpha^{-1}p_h\le f_b\}} p_hv_h d\xi
\]
have to be evaluated for piecewise quadratic polynomials $v_h\in V_h$. This means, any solution method for the discretized problem encounters the difficulties
of integrating over regions bounded by triangles and conics.

The system consisting of the equation \eqref{eq2p10}--\eqref{eq2p12} can be
solved by means of a semi-smooth Newton method, see e.g. \cite{hin05}. Within each step of the method, the
non-smooth equation \eqref{eq2p12} is replaced by a linearized version
\[
 \chi_{\{-\alpha^{-1} p_h^k \in[f_a,f_b]\}} \left( f_h -\frac1\alpha p_h\right)=0 \qquad \mbox{ on }\Omega',
\]
where $p_h^k$ is the adjoint state given by the previous step, and $\chi_A$ denotes the characteristic function of a set $A$.
Multiplying this equation by a test function $v_h\in V_h$ and integrating on $\Omega'$, we obtain
\[\begin{split}
 0&= \iint_{\Omega' \cap \{-\alpha^{-1} p_h^k \in[f_a,f_b]\}}  \left(f_h -\frac1\alpha p_h\right)v_hd\xi  \\
 &= \sum_{T\in \mathcal T_h,\,T\cap\Omega'\ne\emptyset} \iint_{T \cap \{-\alpha^{-1} p_h^k \in[f_a,f_b]\}}  \left(f_h -\frac1\alpha p_h\right)v_hd\xi
\end{split}\]
for all $v_h\in V_h$.
Here, it is important to be able to evaluate the integrals
\[
\iint_{T \cap \{-\alpha^{-1} p_h\in[f_a,f_b]\}} \mathcal P_{[f_a,f_b]}\left(-\frac1\alpha p_h\right) v_h d\xi
\]
and
\[
\iint_{T \cap \{-\alpha^{-1} p_h\in[f_a,f_b]\}} f_h v_h d\xi ,
\]
which can be transformed to the type in the previous sections.

\section{Acknowledgments}

The authors would like to thank J. Schicho for his suggestions and the participants of the Rastenfeld workshop for their feedback.

%
%

\bibliographystyle{abbrv}
\bibliography{\jobname}  
%

\end{document}